\title{On topological and algebraic structures of categorical random variables}
\authors{Inocencio Ortiz, Santiago G\'{o}mez-Guerrero and Christian Schaerer}
\abstract{Based on entropy and symmetric uncertainty (SU), we define a metric for categorical random variables and show that this metric can be promoted into an appropriate quotient space of categorical random variables. Moreover, we show that there is a natural commutative monoid structure in the same quotient space, which is compatible with the topology induced by the metric, in the sense that the monoid operation is continuous.}
\keywords{Correlation, Entropy, Similarity metric, Distance metric, Monoid}
\begin{document}

\section{Introduction}
Symmetric Uncertainty (SU) has been introduced in \cite{CUP:Press_et_al:1988}, and it is defined as: 

\begin{equation} \label{suDefinition}
	SU(X, Y) := 2 \left[ 1 - \frac{H(X, Y)}{H(X) + H(Y)}\right],
\end{equation}
where $X, Y$ are categorical random variables (i.e., random variables taking values in a finite set, not necessarily numerical), $H(\cdot)$ is the Shannon entropy \cite{Shannon-1948}, and $H(\cdot, \cdot)$ is the joint Shannon entropy.

The generalization of SU for the multivariate case, as well as its properties and, in particular, its potential to define a distance metric among categorical random variables have been explored in \cite{Entropy:Gomez-2021,ref-Gomez2023,IS:Sosa-2018, SBMAC:Sosa-2018}. Indeed, \cite{ref-Gomez2023} shows that entropic correlation is an example of similarity.

In this work, we further explore some properties of $SU$ in two directions. On the one hand, we explore some properties of the topology introduced by the metric defined in terms of $SU$ on a given space of categorical random variables. On the other hand, we explore some algebraic properties of a {\it joint} operation defined on the same space where the metric is defined. We conclude that this operation endows the metric space with a commutative monoid structure, and that it is compatible with the metric topology, in the sense that it is a continuous map.

Our work is organized as follows. 

In Section 2, we review some basic concepts from probability and information theory in order to understand the motivation and definition of the symmetric uncertainty. We also discuss the concept of entropy from two perspectives: one related to categorical random variables and another related to partitions of a set. Finally, we conclude this section with some technical lemmas concerning entropy and joint entropy that will be needed later.

In Section 3, we define an appropriate quotient space of categorical random variables and endow it with a metric structure based on entropy and symmetric uncertainty.

In Section 4, we endow the same quotient space of categorical random variables with a monoid structure and establish the compatibility of the topological and algebraic structures.

Finally, in Section 5, we offer some concluding remarks highlighting the potential usefulness of our formalization for statistical practitioners.

\section{Background: Entropic Correlation and Symmetric Uncertainty}
\label{sec:theory}

In this section, we recall the fundamental concepts involved in the definition of $SU$, as well as its main properties, which we will need.

Let us consider a fixed probability space $(D, \Omega_D, P)$, and let $X\colon D\to \Sigma_X$ be a categorical (non-numeric, nominal or qualitative) random variable with possible values in the finite set  $\Sigma_X = \{x_1,\ldots,x_k\}$ and probability mass function $P_X\colon \Sigma_X\to \mathbb{R}$ defined as \(P_X(x_i):= P(X=x_i) := P(X^{-1}(x_i))\). The entropy $H$ of the variable $X$ is a measure of the uncertainty in predicting the value of $X$, or \textit{size of the uncertainty} on $X$, and is defined as \cite{Renyi-1961}:
\begin{equation} \label{defEntropy}
	H(X) := -\sum_{i} P_X(x_i)\log_{2}(P_X(x_i)).
\end{equation}
Any base could be used for the log, but base 2 is used for historical reasons connected to information theory \cite{Shannon-1948}. $H(X)$ can also be interpreted as a measure of the amount of information a discrete random variable $X$ requires for describing its behavior; or more compactly, as the variety inherent to $X$.

Given another categorical random variable $Y\colon D\to \Sigma_Y$, with possible outcomes in the set \(\Sigma_Y=\{y_1, y_2, \ldots, y_m\}\), we may consider the {\it conditional probability}
\[P(x_i|y_j):=\frac{P(X^{-1}(x_i)\cap Y^{-1}(y_j))}{P(Y^{-1}(y_j))}.\]

Then, we can define the conditional entropy
\begin{equation}
	H(X|Y) := -\sum_{j} \left[ P_Y(y_j)\sum_{i}P(x_i|y_j) 
	\log_{2}(P(x_i|y_j)) \right].
\end{equation}
The conditional entropy $H(X|Y)$ quantifies the amount of information needed to describe the outcome of $X$ given that the value of $Y$ is known.

Using entropy and conditional entropy, the concept of {\it joint entropy} is introduced, which for two categorical random variables $X, Y$ is defined by
\begin{equation}\label{def:joint-entropy}
	H(X, Y) = H(Y)+H(X|Y). 
\end{equation}
For more details, we refer the reader to \cite{Bouda}, page 15.

{\it Mutual Information}. Using entropy and conditional entropy introduced above, the mutual information \footnote{In certain literature mutual information is denoted as $I(X;Y)$; see \cite{ref-Fullwood2023}.} $MI(X;Y)$ is introduced \cite{MKP:Quinlan:1993}. Mutual information measures the reduction in uncertainty about the value of $X$ when the value of $Y$ is known, and is defined as:
\begin{equation}\label{infogainDefinition}
	MI(X;Y) := H(X) - H(X|Y).
\end{equation}
Since $MI$ measures how much the information provided by $Y$ makes it easier to predict the value of $X$, it can be used as a {\it measure of correlation}. It should be noted that:
\begin{itemize}
	\item if $X$ and $Y$ are independent then in equation (\ref{infogainDefinition}) $H(X|Y) = H(X)$, and hence $MI(X;Y)=0$. That is, under independence, knowing $Y$ gives no information about the value of $X$; and 
	
	\item if $X$ and $Y$ are fully correlated then $H(X|Y) = 0$ and hence $MI(X;Y) = H(X)$ applying equation (\ref{infogainDefinition}).
\end{itemize}

Mutual information, taken as a measure of correlation, makes it possible to compare correlations with one another. For example, for any random variables $X$, $Z$ and $Y$, $MI(X;Y) > MI(Z;Y)$ means that, when the value of $Y$ is known, the reduction in uncertainty about $X$ is greater than the reduction in uncertainty about $Z$, which suggests that $X$ is more correlated with $Y$ than with $Z$.

By combining equations (\ref{def:joint-entropy}) and (\ref{infogainDefinition}), one readily obtains 
\begin{equation}\label{MIas3entropies}
	MI(X;Y) = H(X) + H(Y) - H(X, Y),
\end{equation}
and since $H(X, Y)$ is symmetric by direct substitution of $X, Y$ into equation \eqref{suDefinition}, it results that $MI(X;Y)$ is symmetric as well, a quite convenient property for a paired measure. A formal proof can be found, for example, as Theorem 1 in \cite{ref-Fullwood2023}.

{\it Symmetric Uncertainty}. On the other hand, as with other informational association measures, MI tends to increase its value when the number of categories in $X$ and/or $Y$ increases, that is, it is biased towards high cardinality features \cite{ML:Quinlan-1986}. Therefore, seeking to compensate for such bias, MI has to be normalized by the sum of entropies of the features. This measure, called {\it Symmetric Uncertainty}~\cite{CUP:Press_et_al:1988} is expressed as:
\begin{equation} \label{suDefinition1}
	SU(X, Y) := 2 \left[ \frac{MI(X;Y)}{H(X) + H(Y)}\right]
\end{equation}
or, combining with equation (\ref{MIas3entropies}),
\begin{equation} \label{suDefinition2}
	SU(X, Y) := 2 \left[1 - \frac{H(X, Y)}{H(X) + H(Y)}\right].
\end{equation}

Revisiting the two bullet points in the previous subsection, note that:
\begin{itemize}
	\item If $X$ and $Y$ are independent, then from $MI(X;Y)=0$ it follows that $SU(X,Y)=0$.
	\item If $X$ and $Y$ are completely correlated, then from \eqref{infogainDefinition} we obtain
	\[
	MI(X;Y)=H(X).
	\]
	By symmetry of mutual information, $MI(Y;X)=H(Y)$, and therefore
	\[
	SU(X,Y)=1.
	\]
\end{itemize}
As a final observation regarding $SU$, we recall from \cite{IS:Sosa-2018} the following results.

\begin{lemma}\label{lem:entropic-ratio} For any two categorical random variables, we have: 
	\begin{enumerate}
		\item[1)] $0\leq SU(X, Y)\leq 1$
		\item[2)] The {\it Entropic Ratio}, defined as
		\begin{equation}
			R(X, Y) := \frac{H(X, Y)}{H(X)+H(Y)},
		\end{equation}
		where $H(X, Y) = H(X) + H(Y|X)$,
		satisfies the inequality
		\[\frac{1}{2}\leq R(X, Y)\leq 1.\]
	\end{enumerate}
\end{lemma}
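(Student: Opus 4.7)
The plan is to prove part~(2) first, then deduce part~(1) from it via the algebraic identity $SU(X,Y)=2\bigl(1-R(X,Y)\bigr)$, which is immediate from equation~(\ref{suDefinition2}). Throughout I would tacitly assume $H(X)+H(Y)>0$, since otherwise $X$ and $Y$ are both almost surely constant and the ratio is undefined.

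For the upper bound $R(X,Y)\leq 1$, I would rewrite equation~(\ref{MIas3entropies}) as
\[H(X)+H(Y)-H(X,Y)=MI(X|Y),\]
so that $R(X,Y)\leq 1$ is equivalent to the subadditivity statement $MI(X|Y)\geq 0$. This is the classical non-negativity of mutual information, obtainable from Jensen's inequality applied to the logarithm (equivalently, by recognizing $MI(X|Y)$ as the Kullback--Leibler divergence between the joint distribution of $(X,Y)$ and the product of its marginals).

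For the lower bound $R(X,Y)\geq 1/2$, I would exploit the two symmetric chain-rule decompositions of joint entropy, $H(X,Y)=H(X)+H(Y|X)=H(Y)+H(X|Y)$ (cf.\ equation~(\ref{def:joint-entropy})). Summing them gives
\[2H(X,Y)=H(X)+H(Y)+H(X|Y)+H(Y|X),\]
so the inequality reduces to $H(X|Y)+H(Y|X)\geq 0$. This is immediate from the definition of conditional entropy as a non-negative linear combination of terms of the form $-p\log_{2}p$ with $p\in[0,1]$, each of which is non-negative. Part~(1) then follows by substituting $1/2\leq R(X,Y)\leq 1$ into $SU(X,Y)=2\bigl(1-R(X,Y)\bigr)$.

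There is no real obstacle here; the only non-trivial ingredient is the non-negativity of mutual information, which is standard and may simply be cited. The essential content is the bookkeeping that packages subadditivity of entropy and non-negativity of conditional entropy into the two symmetric halves $R\leq 1$ and $R\geq 1/2$, after which the $SU$ bounds become a purely cosmetic rescaling.
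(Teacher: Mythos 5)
The paper does not actually prove this lemma; it is recalled verbatim from Sosa et al.\ \cite{IS:Sosa-2018}, so there is no internal argument to compare yours against. Your proof is correct and complete. The two halves are exactly the right decomposition: $R(X,Y)\leq 1$ is equivalent to subadditivity of entropy, i.e.\ $MI(X|Y)\geq 0$, which is the one genuinely non-trivial ingredient and is standard (Jensen/KL divergence); and $R(X,Y)\geq \tfrac{1}{2}$ follows from summing the two chain-rule expansions of $H(X,Y)$ and the term-by-term non-negativity of conditional entropy. Part (1) then is, as you say, a cosmetic rescaling via $SU=2(1-R)$. Your caveat about $H(X)+H(Y)=0$ is a point the paper glosses over entirely, so it is worth keeping. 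One small remark: if you wanted to avoid citing non-negativity of mutual information as an external fact, the paper's own Lemma \ref{lemma:relative-entropy-properties}(b) gives it directly --- since the trivial partition ${\cal D}$ is coarser than any partition ${\cal X}$, that item yields $H({\cal Y}|{\cal X})\leq H({\cal Y}|{\cal D})=H({\cal Y})$, hence $H(X,Y)=H(X)+H(Y|X)\leq H(X)+H(Y)$ --- which would make the whole argument self-contained within the machinery the paper already sets up in Section \ref{sec:theory}.
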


It will be convenient to relate the concept of entropy of a categorical random variable to that of the entropy of a partition, as follows: let us consider again our fixed probability space $(D, \Omega_D, P)$. A partition of $D$ is a countable collection of disjoint measurable subsets of $D$, whose union has measure $1$. We will denote by ${\cal D}$ the trivial partition, namely ${\cal D} = \{D\}$. Observe that we will only need finite partitions.

Given a partition ${\cal Q}$ of $D$, 
the entropy of ${\cal Q}$ is defined by
\[H({\cal Q}) := -\sum_{Q\in {\cal Q}}P(Q)\log(P(Q)).\]

Given two partitions ${\cal Q}$ and ${\cal R}$ of $D$, we say that ${\cal Q}$ is coarser than ${\cal R}$, (or equivalently that ${\cal R}$ is finer than ${\cal Q}$), denoted ${\cal Q}\leq {\cal R}$ if every element of ${\cal R}$ is contained in some element of ${\cal Q}$. The entropy of ${\cal Q}$ relative to ${\cal R}$ is defined by
\begin{equation}\label{eq:relative-entropy-partition}
	H({\cal Q}|{\cal R}):= -\sum_{Q\in {\cal Q}}\sum_{R\in \cal R}P(Q\cap R)\log\left(\frac{P(Q\cap R)}{P(R)}\right).
\end{equation}
Notice that, for any partition ${\cal X}$ of $D$ we have $H({\cal X}|{\cal D}) = H({\cal X})$.

Finally, given two partitions ${\cal Q}$ and ${\cal R}$, the partition ${\cal Q}\lor{\cal R}$ is defined as the intersection of the elements of ${\cal Q}$ with the elements of  ${\cal R}$. Notice that we have ${\cal Q}\leq {\cal Q}\lor{\cal R}$ as well as ${\cal R}\leq {\cal Q}\lor{\cal R}$.

Regarding these concepts, we have the following technical results.

\begin{lemma}[\!\!{\cite[Lemma 9.1.5]{Book:Viana-Oliveira}}]\label{lemma:relative-entropy-properties}
	Given partitions ${\cal X}, \cal{Y}, {\cal Z}$, with finite entropy, the following hold:
	\begin{enumerate}
		\item[a)] $H({\cal X}\lor {\cal Y}|{\cal Z}) = H({\cal X}|{\cal Z})+H({\cal Y}|{\cal X}\lor {\cal Z})$.
		\item[b)] If ${\cal X}\leq {\cal Y}$, then $H({\cal X}|{\cal Z})\leq H({\cal Y}|{\cal Z})$ and $H(\cal{Z}|{\cal X})\geq H(\cal{Z}|{\cal Y})$. 
		\item[c)] ${\cal X}\leq {\cal Y}$ if and only if $H({\cal X}|{\cal Y}) = 0$.
	\end{enumerate}
\end{lemma}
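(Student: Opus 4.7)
The plan is to tackle (a) first, since it is a chain-rule identity from which one half of (b) falls out immediately; then (c), which is a direct consequence of the non-negativity of every summand in relative entropy; and finally the two monotonicity statements in (b), the second of which is the only place a genuine inequality argument is needed.

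For (a), I would expand the definition
\[H({\cal X}\lor{\cal Y}|{\cal Z}) = -\sum_{X,Y,Z}P(X\cap Y\cap Z)\log\frac{P(X\cap Y\cap Z)}{P(Z)}\]
and use the factorisation $\frac{P(X\cap Y\cap Z)}{P(Z)} = \frac{P(X\cap Z)}{P(Z)}\cdot\frac{P(X\cap Y\cap Z)}{P(X\cap Z)}$ to split the logarithm into a sum. Marginalising over $Y$ in the first piece via $\sum_Y P(X\cap Y\cap Z) = P(X\cap Z)$ reproduces $H({\cal X}|{\cal Z})$, and the second piece is \emph{literally} the definition of $H({\cal Y}|{\cal X}\lor{\cal Z})$, since the atoms of ${\cal X}\lor{\cal Z}$ are precisely the intersections $X\cap Z$.

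For (c), I would observe that every summand of $H({\cal X}|{\cal Y})$ is non-negative, because $P(X\cap Y)\leq P(Y)$ forces $\log(P(X\cap Y)/P(Y))\leq 0$. If ${\cal X}\leq{\cal Y}$, then each $Y$ lies in a unique $X$, so every summand is either $0$ (empty intersection) or $-P(Y)\log 1 = 0$. Conversely, vanishing of the total forces each summand to vanish, so for every $Y$ of positive measure exactly one $X$ can satisfy $P(X\cap Y)=P(Y)$, yielding $Y\subseteq X$ modulo a null set and hence ${\cal X}\leq{\cal Y}$.

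For (b), note that ${\cal X}\leq{\cal Y}$ gives ${\cal X}\lor{\cal Y}={\cal Y}$, so (a) immediately yields $H({\cal Y}|{\cal Z}) = H({\cal X}|{\cal Z})+H({\cal Y}|{\cal X}\lor{\cal Z})$, and the non-negativity of the last term delivers $H({\cal X}|{\cal Z})\leq H({\cal Y}|{\cal Z})$. The reverse-conditioning inequality $H({\cal Z}|{\cal X})\geq H({\cal Z}|{\cal Y})$ is the real content and, I expect, the main obstacle, since it cannot be reduced to the chain rule alone and requires an honest convexity argument. I would invoke the log-sum inequality applied to the groupings $P(X\cap Z)=\sum_{Y\subseteq X}P(Y\cap Z)$ and $P(X)=\sum_{Y\subseteq X}P(Y)$, obtaining $\sum_{Y\subseteq X}P(Y\cap Z)\log\frac{P(Y\cap Z)}{P(Y)}\geq P(X\cap Z)\log\frac{P(X\cap Z)}{P(X)}$ for each fixed $X$ and $Z$, and then summing and negating. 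Everything else reduces to bookkeeping with the definitions and the multiplicative structure of conditional probability.
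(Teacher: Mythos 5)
Your proof is correct and complete: the factorisation $\frac{P(X\cap Y\cap Z)}{P(Z)}=\frac{P(X\cap Z)}{P(Z)}\cdot\frac{P(X\cap Y\cap Z)}{P(X\cap Z)}$ gives (a), term-by-term non-negativity gives (c) (with the containment $Y\subseteq X$ holding modulo null sets, as you note), the first half of (b) follows from (a) via ${\cal X}\lor{\cal Y}={\cal Y}$, and the log-sum inequality applied to the grouping $P(X\cap Z)=\sum_{Y\subseteq X}P(Y\cap Z)$ gives the reverse-conditioning inequality. The paper does not prove this lemma itself but imports it from Viana--Oliveira, and your argument is essentially the standard one found there (the log-sum inequality being just a repackaging of the concavity of $-x\log x$ used in that source).
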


Notice that, taking ${\cal Z} ={\cal D}$ in item $a)$ we get, for any two partitions ${\cal X}$ and ${\cal Y}$:
\begin{equation}\label{eq:abs-entropy-less-joint-entropy}
	H({\cal X\lor{\cal Y}}) = H({\cal X})+H({\cal Y|X})\implies H({\cal X})\leq H({\cal X}\lor {\cal Y}).
\end{equation}

Also, from ${\cal Y}\leq {\cal Y}\lor{\cal Z}$ and ${\cal Z}\leq {\cal Y}\lor{\cal Z}$ we get (applying item $b)$)
\begin{equation}\label{eq:refinement-inequality}
	H({\cal X}|{\cal Y}\lor{\cal Z})\leq H({\cal X}|{\cal Y}),\quad H({\cal X}|{\cal Y}\lor{\cal Z})\leq H({\cal X}|{\cal Z})
\end{equation}

Now, if \[X:D\to \Sigma_X = \{x_1, x_2, \ldots, x_m\}\] is a categorical random variable defined on $D$, then it defines a partition ${\cal X}$ of $D$ given by the inverse images of $x_i$, i.e., 
\[{\cal X} := \{X^{-1}(x_i), i=1, 2, \ldots, m\}.\]

The basic observation here is that the concept of entropy of a partition, and the entropy of a partition relative to another partition, recover the corresponding concepts regarding categorical random variables. In particular, given two categorical random variables $X$ and $Y$ defined on $D$, we have
\[H(X, Y) = H({\cal X}\lor {\cal Y}).\]

\section{Topological structure of Categorical Variables}\label{sec:topestruc}

In this section we introduce a topology on a set of categorical variables. We will do so by defining a distance metric based on the SU measure. To motivate the definition, we will first explore some examples in order to get an idea of how the SU measure can give us a way to compare categorical variables.

\subsection{Measuring similarity on a dataset: between observations and between variables}
A set of data, or dataset, consists of $m$ observations taken across $n$ variables, or equivalently, \(n\) random variables \(X_j\) evaluated on \(m\) subjects \(q_i\). Let us set \(x_{ij} = X_j(q_i)\), so we can arrange the observed values as shown in Table \ref{tab:adataset}. 

\begin{table}[ht]
	\centering
	\setlength\arrayrulewidth{0.1pt}
	\renewcommand{\arraystretch}{2.25}
	
	\begin{tabular}{|c c c c c c c|}
		\hline
		$X_1$ & $X_2$ & $X_3$ & $X_4$ & $X_5$ & \ldots & $X_n$ \\
		\hline
		$x_{11}$ & $x_{12}$ & $x_{13}$ & $x_{14}$ & $x_{15}$ & ... & $x_{1n}$ \\
		$x_{21}$ & $x_{22}$ & $x_{23}$ & $x_{24}$ & $x_{25}$ & ... & $x_{2n}$ \\
		\vdots & \vdots & \vdots & \vdots & \vdots &   & \vdots \\
		$x_{m1}$ & $x_{m2}$ & $x_{m3}$ & $x_{m4}$ & $x_{m5}$ & ... & $x_{mn}$ \\
		\hline
	\end{tabular}
	\caption{\label{tab:adataset} {A generic dataset with $n$ column variables and $m$ row observations.}}
\end{table}

Looking row-wise at the dataset, any two observations (rows) can be compared to determine their similarity. If two rows are equal throughout all their values, we say that they are in full similarity or equivalently, we say that the distance between these observations is zero. If not all the values on the two rows are equal, several techniques are available to compute distances in this $n$-dimensional space, where a mix of data types is not unusual; see for instance Suárez-Díaz et al. \cite{UG:Suarez-2020}.

We can also look at the same dataset in a column-wise manner \cite{ref-Gomez2023}. In exploratory experiments and in model construction, one tries to establish how the value in one column (the \textit{outcome} or \textit{class}) results from the values in other columns, revealing a \textit{concomitance} or \textit{correlation} between the random variables. 

Is it possible to find similarities between columns? Since each column measures a specific characteristic, and perhaps has been recorded in its own unit of measurement, we rarely expect to find column similarity in the same way that we find similar rows. However, here one can think of similarity in a slightly different way. It is known that columns may relate to each other via correlations. Correlation does not necessarily imply causality, but it is reasonable to view high correlation between two variables $X_i$ and $X_j$ as an indicator of being linked to the same phenomenon. Because of this, we may think that if the correlation between $X_i$ and $X_j$ is high, then the two variables are ``close to each other''; and conversely, low correlation between $X_i$ and $X_j$ sets these variables apart.

We address the problem of finding similarities (or from a complementary viewpoint, finding distances) between columns, variables or features of a database. That is, given any two variables of the dataset, what is the similarity between them and how can we calculate it? 

For categorical variables, the Symmetric Uncertainty measure of correlation computes this kind of similarity by comparing the joint entropy of the pair of variables with their individual entropies. So the answer to the above questions is: Yes, similarity expressed as correlation between two columns is computable as the SU. This similarity is based on variables belonging to a group of intertwined characteristics. The next example illustrates how two columns may be viewed as similar.

\begin{example}
	Students applying for a certain internship had their scores recorded on 5 traits: Neatness, Punctuality, IQ, Attention Type, Creativity, along with a response variable GotHired.    
\end{example}
\begin{table}[ht]
	\centering
	\scriptsize
	\setlength\arrayrulewidth{0.1pt}
	\renewcommand{\arraystretch}{1.5}
	
	\begin{tabular}{|c|c|c|c|c|c|c|}
		\hline
		$Subject$ & $Neatness$ & $Creativity$ & $Punctuality$ & $IQuotient$ & $Attention Type$ & $GotHired$\\
		\hline
		1 & R & D & L & H & A & N \\
		2 & U & S & E & L & A & N \\
		3 & R & D & L & H & A & Y \\
		4 & U & D & E & L & SU & Y \\
		5 & R & I & L & H & SE & N \\
		6 & S & S & O & A & SU & Y \\
		7 & R & D & L & H & SU & Y \\
		8 & S & I & O & A & SE & N \\
		9 & R & S & L & H & A & N \\
		10 & S & D & O & A & A & Y \\
		11 & R & S & L & H & D & N \\
		12 & R & I & L & H & A & N \\
		13 & U & D & E & L & SE & Y \\
		14 & R & D & L & H & D & Y \\
		15 & R & I & L & H & SU & N \\
		16 & U & I & E & L & SU & N \\
		17 & U & D & E & L & SU & Y \\
		18 & S & D & O & A & A & Y \\
		19 & S & S & O & A & SU & N \\
		20 & R & I & L & H & D & N \\
		\hline
	\end{tabular}
	\caption{\label{tab:msu4b}{Personality traits of students applying for an internship. Variables: Neatness (Untidy/Smooth/Refined), Creativity (Divergent/Solver/Imaginative), Punctuality (Erratic/On-time/Late), Intelligence Quotient (Low/Average/High), Attention Type (Selective/Sustained/Divided/Alternating), GotHired (No/Yes)}}
\end{table}

Table \ref{tab:msu4b} presents an example dataset for the personality traits of students applying for an internship.  Computing probabilities from the observed frequency, Table \ref{SU_NeatnessCalculation} shows the calculation of $SU$ for the variables Neatness and Hired based on the three terms in Equation \eqref{suDefinition2}.

\begin{table}[ht]
	\centering
	\begin{tabular}{|c c c|}
		\hline
		Number of cases = 20 & & \\ \hline
		Neatness & $P$(Neatness) & log $P$(Neatness) \\ 
		U        & 0.25        & -2               \\
		S        & 0.25        & -2               \\
		R        & 0.5         & -1               \\
		$H$(Neatness) = 1.5    & &               \\ \hline
		Hired    & $P$(Hired)    & log $P$(Hired)     \\ 
		N        & 0.55        & -0.8625          \\
		Y        & 0.45        & -1.1520          \\
		$H$(Hired) = 0.9928 &          &            \\ \hline
		Neatness, Hired & $P$(Neatness, Hired) & log $P$(Neatness, Hired) \\ 
		U, N      & 0.1         & -3.3219          \\
		U, Y      & 0.15        & -2.7370          \\
		S, N      & 0.1         & -3.3219          \\
		S, Y      & 0.15        & -2.7370          \\
		R, N      & 0.35        & -1.5146          \\
		R, Y      & 0.15        & -2.7370          \\
		$H$(Neatness, Hired) = 2.4261 &   &          \\ \hline
		$SU$(Neatness, Hired) = 0.0535 &   &         
		    \\ \hline
	\end{tabular}
	\caption{Calculating $SU$(Neatness, Hired) from Table \ref{tab:msu4b}.}
	\label{SU_NeatnessCalculation}
\end{table}

The full set of resulting $SU$ values between each of the variables and the \textit{Hired} response is:
\begin{itemize}
	\item $SU$(Neatness, Hired) = 0.0535
	\item $SU$(Creativity, Hired) = 0.4627 (largest computed $SU$)
	\item $SU$(Punctuality, Hired) = 0.0535
	\item $SU$(IQuotient, Hired) = 0.0535
	\item $SU$(Attention Type, Hired) = 0.0192
\end{itemize}

Out of many possible predictive models, a model to predict whether a student gets hired based on a single predictor variable is perhaps the simplest one. Of all the 5 feature vs class pairs in the data, the \textit{creativity-gotHired} pair has the highest SU value. Let us analyze this a bit more.

Suppose we need to group our 20 cases by each category of Creativity and by each category of GotHired. When we do this, Creativity and GotHired produce about the same groupings, as seen in Table \ref{tab:CreatGotHired}: creativity D is associated with GotHired=Y, and creativity levels S or I are associated with GotHired=N.

\begin{table}[ht]
	\centering
	\small
	\setlength\arrayrulewidth{0.1pt}
	\renewcommand{\arraystretch}{1.5}
	
	\begin{tabular}{|l| c c c|}
		\hline
		\multirow{2}{*}{--}& \multirow{2}{*}{Creativity,} & \multirow{2}{*}{Creativity,} &
		\multirow{2}{*}{Creativity,}  \\
		& D & S & I \\
		\hline
		\multirow{2}{*}{GotHired, Y}&
		\multirow{2}{*}{$8$} & \multirow{2}{*}{$1$} &
		\multirow{2}{*}{$0$}  \\
		&  &  & \\[-10pt]
		\multirow{2}{*}{GotHired, N}& \multirow{2}{*}{$1$} & \multirow{2}{*}{$4$} &
		\multirow{2}{*}{$6$}  \\
		& & & \\
		\hline
	\end{tabular}
	\caption{\label{tab:CreatGotHired}{Counts for Creativity and Getting Hired}}
\end{table}

Thus, creativity is akin to getting hired, or \textit{Creativity} is similar to \textit{GotHired} in the correlation sense. From the point of view of a two-way model, if we know that the value of creativity is D, we can almost predict that the value of GotHired will be Y, while for creativities S or I the candidate probably will not be hired.

The same kind of analysis could be done for the other four resulting SU values, detecting similarities with GotHired in accordance with each SU value.

Note that, if we replace \textit{GotHired} by its negative \textit{NotHired}, all the $SU$ values would remain the same as only the probabilities of the labels (not labels themselves) participate in $SU$ computations.

\subsection{Entropic equivalence and the metric structure}

In many situations, given a set $\Omega$, it is useful to have a function $f\colon \Omega\times \Omega\to \mathbb{R}$ which tells us how similar or dissimilar any given pair of elements of $\Omega$ is. One such function is the well-known {\it distance metric}, which is any function $d\colon \Omega \times \Omega\to \mathbb{R}$ satisfying the following conditions:

\begin{enumerate}
	\item $d(x, y) \ge 0$ (non-negativity),
	\item $d(x, y) = d(y, x)$ (symmetry),
	\item $d(x, z) \le d(x, y) + d(y, z)$ (triangle inequality),
	\item $d(x, y) = 0$ if and only if $x = y$. (identity of indiscernibles)
\end{enumerate}

Less standardized is the concept of a {\it similarity metric}. Here, we recall the definition given by Chen et al. in \cite{TCS:Chen-2009}. 

\begin{definition} 
	Given a set $\Omega$, a real-valued function $s(x, y)$ defined on the Cartesian product $\Omega \times \Omega$ is a similarity metric if, for any $x, y, z \in \Omega$, it satisfies the following conditions:
	\begin{enumerate}
		\item $s(x, y) = s(y, x)$ (symmetry)
		\item $s(x, x) \ge 0$ (non-negativity)
		\item $s(x, x) \ge s(x, y)$ (self-identity)
		\item $s(x, y) + s(y, z) \le s(x, z) + s(y, y)$ (triangle inequality)
		\item $s(x, x) = s(y, y) = s(x, y)$ if and only if $x = y$. (identity of indiscernibles)
	\end{enumerate}
\end{definition}

In \cite{TCS:Chen-2009}, the authors also gave conditions for a similarity metric to yield a distance metric. More concretely, they established a precise relationship between normalized distances and similarity metrics. Here, {\it normalized} means that both the distance and the similarity map into the interval $[0, 1]\subset \mathbb{R}$. For the reader’s convenience, let us recall their result here.

\begin{lemma}[\!\!{\cite[Corollary 1]{TCS:Chen-2009}}]\label{lem:similarity-distance}
	If $s(x, y)$ is a normalized similarity metric and for any $x$, $s(x, x) = 1$, then $\frac{1}{2}
	(1 - s(x, y))$ is a normalized distance
	metric. If, in addition, $s(x, y) \ge 0$, then $1 - s(x, y)$ is a normalized distance metric. If $d(x, y)$ is a normalized distance metric, then
	$1 - d(x, y)$ is a normalized similarity metric.
\end{lemma}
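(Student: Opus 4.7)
The lemma consists of three separate implications relating normalized similarity and distance metrics. In each case I would verify the defining axioms of the target object one by one; the arguments are purely algebraic manipulations exploiting the hypothesis $s(x,x)=1$ (respectively $d(x,x)=0$) to convert the similarity triangle inequality into the distance triangle inequality and back.

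First, assume $s$ is a normalized similarity metric with $s(x,x)=1$ for every $x$, and set $d(x,y):=\tfrac{1}{2}(1-s(x,y))$. Symmetry of $d$ is immediate from symmetry of $s$. Non-negativity follows from self-identity: $s(x,y)\le s(x,x)=1$, hence $d(x,y)\ge 0$. For the triangle inequality I would use axiom~(4) of a similarity metric, $s(x,y)+s(y,z)\le s(x,z)+s(y,y)$, together with $s(y,y)=1$, to obtain $s(x,y)+s(y,z)-s(x,z)\le 1$; after the affine substitution $s\mapsto \tfrac{1}{2}(1-s)$ this becomes exactly $d(x,z)\le d(x,y)+d(y,z)$. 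Identity of indiscernibles transfers from axiom~(5) of similarity, since $d(x,y)=0$ iff $s(x,y)=1=s(x,x)=s(y,y)$. The normalization $d(x,y)\in[0,1]$ follows from the fact that $s$ takes values in the normalized range.

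Second, for the variant with $d(x,y):=1-s(x,y)$ under the additional hypothesis $s\ge 0$, the same four verifications go through verbatim: the factor $\tfrac{1}{2}$ plays no role in symmetry, non-negativity, identity of indiscernibles or in the triangle inequality. The additional hypothesis $s\ge 0$ is used only to ensure that $d$ still satisfies the upper bound $d(x,y)\le 1$.

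Third, for the converse, given a normalized distance $d$, define $s(x,y):=1-d(x,y)$. Each of the five similarity axioms is a direct algebraic transposition of a distance axiom: symmetry from symmetry; $s(x,x)=1\ge 0$ from $d(x,x)=0$; self-identity $s(x,x)\ge s(x,y)$ from $d(x,y)\ge 0$; the similarity triangle inequality from rearranging $d(x,z)\le d(x,y)+d(y,z)$; and identity of indiscernibles from its distance counterpart. Normalization $s\in[0,1]$ follows immediately from $d\in[0,1]$.

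The whole exercise is bookkeeping and presents no substantive difficulty. The only subtlety worth flagging is tracking which axiom of the source structure each axiom of the target structure uses, and in particular noticing that the factor $\tfrac{1}{2}$ in the first statement is exactly what lets the conclusion hold without any additional sign hypothesis on $s$, whereas dropping it forces the extra assumption $s\ge 0$ in the second statement.
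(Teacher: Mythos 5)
Your verification is correct; note, however, that the paper does not prove this statement at all---it is imported verbatim as Corollary 1 of Chen et al.\ \cite{TCS:Chen-2009}---so there is no in-paper argument to compare against. Your axiom-by-axiom check is the standard one, and your closing remark about the role of the factor $\tfrac{1}{2}$ versus the extra hypothesis $s(x,y)\ge 0$ correctly identifies the only non-mechanical point.
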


Looking at pairs of categorical variables, we know that the more correlated two variables are, the greater their SU value, hence SU appears to be a similarity measure. Showing that SU is a similarity metric requires the concept of {\it indiscernibility} defined for categorical random variables.
Let us make this observation precise.

\begin{definition} \label{defIndiscernibles}
	Let $\cal C$ be a set of categorical random variables defined on a sample space $D$. If $X\colon D\to \Sigma_X$ and $Y\colon D\to \Sigma_Y$ are two elements in ${\cal C}$, we say that they are \textbf{indiscernible} if there is a bijection \(h\colon \Sigma_X\to \Sigma_Y\) such that \(Y=h\circ X\) almost everywhere.  
\end{definition}

Let us notice that, for two indiscernible categorical random variables \(X\) and \(Y\), their corresponding partition \({\cal X}\) and \({\cal Y}\) are equal almost everywhere. We will denote this by \({\cal X}\sim {\cal Y}\). In particular, the relative entropy given by Equation~\eqref{eq:relative-entropy-partition} is zero for two indiscernible categorical random variables. 

The indiscernibility of categorical random variables introduces an equivalence relation in $\cal C$. Let us denote by $\overline{{\cal C}}$ the space of equivalence classes. Notice that any two equivalence classes may have the same number of categories or a different number of categories.

\begin{example}[of indiscernibles] 
	Consider variable $X$ with values 1, 2 or 3 and variable $Y$ with values A, B or C. Table \ref{tab:adataset2} shows a sample of 10 individuals. Considering the probability given by the observed frequency, we can see that the histogram of \(X\) is 0.5, 0.1, 0.4, while the histogram of \(Y\) is 0.4, 0.1, 0.5. Here, we can define the bijection \(h\colon \{1, 2, 3\}\to \{A, B, C\}\) given by \(h(1) = C; h(2)=B; h(3)=A\), and then the partition defined on the domain by \(Y\) and \(h\circ X\) are the same.
	
	\begin{table}[ht]
		\centering
		\setlength\arrayrulewidth{0.1pt}
		\renewcommand{\arraystretch}{1.3}
		
		\begin{tabular}{|c c c|}
			\hline
			$Row$ & $X$ & $Y$ \\
			\hline
			1 & 2 & B \\
			2 & 1 & C \\
			3 & 3 & A \\
			4 & 3 & A \\
			5 & 1 & C \\
			6 & 1 & C \\
			7 & 3 & A \\
			8 & 1 & C \\
			9 & 3 & A \\
			10 & 1 & C \\
			\hline
		\end{tabular}
		\caption{\label{tab:adataset2} \textbf{Example of Indiscernibles.}}
	\end{table}
	
\end{example}
Let us observe that the $SU$ can be promoted to the quotient space $\overline{\cal{C}}$ in the natural way, namely, if $[X], [Y]\in {\cal C}$, then
\begin{equation}\label{eq:SU-quotient}
	SU([X], [Y]) := SU(X, Y)
\end{equation}
is well defined. 

Now we can state the following result, where, for simplicity, we will omit the bracket notation for equivalence classes.

\begin{theorem} \label{TheoSimilarityMetric}
	Given a space $\mathcal{C}$ of categorical random variables, the SU as defined in equation \eqref{suDefinition}, induces via \eqref{eq:SU-quotient} a positive normalized similarity metric in the space $\overline{\cal{C}}$. That is, for any $X$, $Y$ and $Z$ in $\overline{\cal{C}}$, the following conditions hold:
	\begin{enumerate}
		\item Symmetry: $SU(X, Y) = SU(Y, X),$
		\item Reflexivity: $SU(X, X) \geq 0,$
		\item Self-similarity: $SU(X, X) \geq SU(X, Y),$
		\item Triangle inequality: $SU(X, Y) + SU(Y, Z) \leq SU(X, Z) + SU(Y, Y),$
		\item Identity of indiscernibles: $SU(X, X) = SU(Y, Y) = SU(X, Y)$ if and only if $X = Y$.
		\item Positivity and normalization: $SU(X, Y)\in [0, 1]$.
	\end{enumerate}
\end{theorem}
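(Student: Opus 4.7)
The plan is to verify the six conditions in turn, translating between categorical random variables and their induced partitions $\mathcal{X},\mathcal{Y},\mathcal{Z}$ of $D$ so that the technical tools of Lemma \ref{lemma:relative-entropy-properties} apply directly via the identifications $H(X,Y)=H(\mathcal{X}\vee\mathcal{Y})$ and $H(X|Y)=H(\mathcal{X}|\mathcal{Y})$. The keystone computation is $SU(X,X)=1$: since $\mathcal{X}\vee\mathcal{X}=\mathcal{X}$ gives $H(X,X)=H(X)$, we obtain $SU(X,X)=2(1-H(X)/(2H(X)))=1$. Combined with $SU\in[0,1]$ from Lemma \ref{lem:entropic-ratio}(1), this at once disposes of symmetry (immediate from the symmetry of $H(X,Y)$ in the numerator of \eqref{suDefinition2}), reflexivity, self-similarity $SU(X,X)=1\geq SU(X,Y)$, and positivity/normality.

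For the identity of indiscernibles, the equality $SU(X,X)=SU(Y,Y)=1$ reduces the claim to $SU(X,Y)=1\iff [X]=[Y]$. The converse is easy: indiscernibility gives a bijection $\Sigma_X\to\Sigma_Y$ and hence $\mathcal{X}=\mathcal{Y}$ as partitions, so $H(X,Y)=H(X)=H(Y)$ and $SU(X,Y)=1$. For the forward direction, $SU(X,Y)=1$ forces $H(X,Y)=\tfrac{1}{2}(H(X)+H(Y))$. Inequality \eqref{eq:abs-entropy-less-joint-entropy} together with its symmetric counterpart give $H(X,Y)\geq\max\{H(X),H(Y)\}\geq\tfrac{1}{2}(H(X)+H(Y))$, so both inequalities must be tight, forcing $H(X)=H(Y)=H(X,Y)$ and therefore $H(X|Y)=H(Y|X)=0$. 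By Lemma \ref{lemma:relative-entropy-properties}(c) this yields $\mathcal{X}\leq\mathcal{Y}$ and $\mathcal{Y}\leq\mathcal{X}$, so the partitions coincide; the resulting bijection $X^{-1}(x_i)=Y^{-1}(\phi(x_i))$ automatically matches probability masses, giving indiscernibility.

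The triangle inequality is the main obstacle. After rearrangement it is equivalent to
\begin{equation*}
R(X,Y)+R(Y,Z)-R(X,Z)\geq \tfrac{1}{2},
\end{equation*}
where $R$ is the entropic ratio from Lemma \ref{lem:entropic-ratio}(2). The natural first step is to derive the classical chain bound $H(\mathcal{X}|\mathcal{Z})\leq H(\mathcal{X}|\mathcal{Y})+H(\mathcal{Y}|\mathcal{Z})$: expanding $H(\mathcal{X}\vee\mathcal{Y}|\mathcal{Z})$ via Lemma \ref{lemma:relative-entropy-properties}(a), applying \eqref{eq:refinement-inequality} to the terms involving $\mathcal{Y}\vee\mathcal{Z}$, and invoking the monotonicity in Lemma \ref{lemma:relative-entropy-properties}(b) yields the claim. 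Applied symmetrically, this gives an unnormalized triangle bound on $D(X,Y):=H(X|Y)+H(Y|X)$.

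The genuinely delicate step is then transferring the unnormalized triangle bound on $D$ to the normalized quantity $1-SU(X,Y)=D(X,Y)/(H(X)+H(Y))$: the three denominators $H(X)+H(Y)$, $H(Y)+H(Z)$, and $H(X)+H(Z)$ are distinct, so the unnormalized bound must be reweighted, and one expects the entropic-ratio bound $R\geq \tfrac{1}{2}$ of Lemma \ref{lem:entropic-ratio}(2) to have to supply the extra $\tfrac{1}{2}$ on the right. Balancing these denominators without losing the information from the chain bound is where the bulk of the technical work is concentrated, and is the principal source of difficulty in the argument.
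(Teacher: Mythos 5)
Your handling of conditions 1, 2, 3 and 6 matches the paper's (both rest on $SU(X,X)=1$ together with Lemma \ref{lem:entropic-ratio}), and your argument for condition 5 is actually more careful than the paper's one-line remark: the chain $H(X,Y)\ge\max\{H(X),H(Y)\}\ge\tfrac12(H(X)+H(Y))$ forcing $H(X|Y)=H(Y|X)=0$ is exactly the right way to make the forward implication precise. (Be aware, though, that your backward direction reads ``indiscernible'' as ``equal induced partitions,'' which is stronger than Definition \ref{defIndiscernibles} as literally stated; equality of histograms alone does not give $\mathcal{X}=\mathcal{Y}$, and indeed is not by itself enough to make \eqref{eq:SU-quotient} well defined.)

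The genuine gap is condition 4, and you have located it exactly where it lives: the claim is equivalent to $R(X,Y)+R(Y,Z)-R(X,Z)\ge\tfrac12$, and neither the bounds $\tfrac12\le R\le 1$ nor the unnormalized chain bound $H(X|Z)\le H(X|Y)+H(Y|Z)$ suffice to reconcile the three distinct denominators. This step cannot be completed, because the inequality is false. Take $X$ and $Z$ independent fair coins and $Y=X*Z$. Then $H(X)=H(Z)=1$, $H(Y)=H(X,Y)=H(Y,Z)=2$ and $H(X,Z)=2$, so $R(X,Y)=R(Y,Z)=\tfrac23$ while $R(X,Z)=1$, giving $R(X,Y)+R(Y,Z)-R(X,Z)=\tfrac13<\tfrac12$; equivalently $SU(X,Y)+SU(Y,Z)=\tfrac43>1=SU(X,Z)+SU(Y,Y)$. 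The paper's own proof obtains $\tfrac12\le R(X,Y)+R(Y,Z)-R(X,Z)$ by subtracting the inequality $\tfrac12\le R(X,Z)\le 1$ from $1\le R(X,Y)+R(Y,Z)\le 2$ ``side by side,'' which is not a valid operation: from those bounds one may only conclude $0\le R(X,Y)+R(Y,Z)-R(X,Z)\le\tfrac32$. So the difficulty you flagged in the reweighting step is not a technical obstacle to be overcome but a sign that normalization by $H(X)+H(Y)$ destroys the triangle inequality (normalizing the symmetric quantity $H(X|Y)+H(Y|X)$ by $H(X,Y)$ or by $\max\{H(X),H(Y)\}$ is what yields a genuine metric), and the theorem as stated, along with the distance-metric corollary that follows it, requires repair rather than a cleverer proof.
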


\begin{proof}
	First, note that if $X$, $Y$ and $Z$ are any categorical random variables, SU as defined in equation (\ref{suDefinition}) is a composite random variable, and at the same time it is a real-valued function on the Cartesian product $\mathcal{C} \times \mathcal{C}$.
	
	({\bf Condition 1}) is straightforward by substitution into equation (\ref{suDefinition}). Likewise, for the reflexivity ({\bf Condition 2}), using equation (\ref{suDefinition}) we get the self-similarity $SU(X, X) = 1 \geq 0$.
	
	The self-similarity ({\bf Condition 3}) establishes that $SU(X, X)$ is greater than or equal to any other value of SU(X, Y). This is indeed true as we have just seen that $SU(X, X) = 1$ which is the maximum value achievable by the measure.
	
	Equivalent to the triangle inequality in a distance metric, the triangle inequality (\textbf{Condition 4}) states that the similarity between $X$ and $Z$ through $Y$ is no greater than the direct similarity between $X$ and $Z$ plus the self-similarity of $Y$. Let us start by rewriting the inequalities for the two-variable cases, using entropic ratio notation (see definition in Lemma \ref{lem:entropic-ratio}). For the variables $X$, $Y$, and $Z$, we get:
	\begin{align} 
		\frac{1}{2} &\le R(X, Y) \le 1,  \nonumber\\
		\frac{1}{2} &\le R(Y, Z) \le 1,  \nonumber\\
		\frac{1}{2} &\le R(X, Z) \le 1.  \nonumber
	\end{align}
	
	Also note that $SU(X, Y) = 2(1 - R(X,Y))$. We now show Condition 4 in four steps.
	
	a. Working with corresponding sides, add the first two inequalities above and then subtract the third,
	\begin{align} 
		1 &\le R(X,Y) + R(Y,Z) \le 2  \\
		1/2 &\le R(X,Y) + R(Y,Z) - R(X,Z) \le 1
	\end{align} 
	
	b. From here on, use only the left and the middle members. Transpose two terms to the left, getting
	\begin{align} 
		1/2 - R(X,Y) - R(Y,Z) &\le -R(X,Z)
	\end{align} 
	
	c. Add (1 + {\textonehalf}) to each side, then multiply each side by 2.
	\begin{align}
		1 - R(X,Y) + 1 - R(Y,Z) &\le 1 - R(X,Z) + (1/2) \\
		2(1 - R(X,Y)) + 2(1 - R(Y,Z)) &\le 2(1 - R(X,Z)) + 2(1/2)
	\end{align}
	
	d. Remembering that $SU(Y,Y)$ = 1,
	\begin{align}
		SU(X, Y) + SU(Y,Z) &\le SU(X,Z) + 1  \\
		SU(X, Y) + SU(Y,Z) &\le SU(X,Z) + SU(Y,Y)
	\end{align}
	
	(\textbf{Condition 5}) results from applying $SU(X,X)$ = 1 once again, and noting that $X=Y$ is understood as equality in entropy.
	
	Finally, notice that (\textbf{Condition 6}) was already established in Lemma \ref{lem:entropic-ratio}. This finalizes the proof.
\end{proof} 

As a direct consequence of this theorem, together with Lemma \ref{lem:similarity-distance}, we have the main result of this section.

\begin{theorem}
	Given a space $\cal C$ of categorical random variables, let $\overline{\cal C}$ be its quotient space defined by indiscernibility. The quantity $1 - SU(X, Y)$ is a normalized distance metric in $\overline{\cal C}$.
\end{theorem}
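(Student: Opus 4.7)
The plan is to deduce the theorem as an immediate application of Lemma \ref{lem:similarity-distance} to the similarity metric established in Theorem \ref{TheoSimilarityMetric}. First I would invoke Theorem \ref{TheoSimilarityMetric}, which precisely says that $SU$, viewed through the quotient map \eqref{eq:SU-quotient}, is a normalized similarity metric on $\overline{\cal C}$ with the additional property $SU(X,X) = 1$ for every $X$; the latter is immediate from $H(X,X) = H(X)$ substituted into \eqref{suDefinition2}.

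Next I would verify the extra hypothesis needed to invoke the second clause of Lemma \ref{lem:similarity-distance}, namely that $SU(X,Y) \ge 0$ for all $X, Y \in \overline{\cal C}$. This is condition (6) of Theorem \ref{TheoSimilarityMetric}, itself recorded in Lemma \ref{lem:entropic-ratio}. With this in place, the second assertion of Lemma \ref{lem:similarity-distance} applies verbatim and yields that $1 - SU(X,Y)$ is a normalized distance metric on $\overline{\cal C}$.

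I do not anticipate any substantive obstacle, since all the heavy lifting has been carried out upstream. The only point worth being explicit about is well-definedness on the quotient: because $SU$ descends to $\overline{\cal C}$ by \eqref{eq:SU-quotient}, so does $1 - SU$, and the \emph{identity of indiscernibles} axiom for the resulting distance then follows directly from condition (5) of Theorem \ref{TheoSimilarityMetric}, while non-negativity, symmetry and the triangle inequality for $1 - SU$ correspond one-to-one to conditions (6), (1) and (4) of that theorem via the elementary manipulation built into Lemma \ref{lem:similarity-distance}.
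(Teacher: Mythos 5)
Your proposal is correct and follows exactly the route the paper takes: the paper also derives this theorem as a direct consequence of Theorem \ref{TheoSimilarityMetric} combined with the second clause of Lemma \ref{lem:similarity-distance}, using $SU(X,X)=1$ and $SU(X,Y)\ge 0$. Your added remarks on well-definedness over the quotient are a harmless (and slightly more careful) elaboration of the same argument.
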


Let us notice that, since we are working with categorical variables, it may seem that the metric topology just introduced is discrete. However, this is not the case, as we show in the following theorem:

\begin{theorem}\label{thm:non-discrete-topology}
	The metric topology introduced in $\overline{\cal C}$ is not discrete.
\end{theorem}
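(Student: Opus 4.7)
The plan is to show that at least one point of $\overline{\mathcal{C}}$ is not isolated, which suffices to prove the topology is not discrete. Equivalently, I will fix some $X\in\overline{\mathcal{C}}$ and produce a sequence $(Y_n)\subset \overline{\mathcal{C}}$ with $Y_n\neq X$ and $d(X,Y_n)=1-SU(X,Y_n)\to 0$.

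First I would take $X\colon D\to\{a,b\}$ with $P_X(a)=P_X(b)=1/2$, so $H(X)=1$. Assuming that the underlying probability space $D$ admits measurable subsets of arbitrary small positive measure inside $X^{-1}(a)$ (which is true, for instance, whenever $P$ is non-atomic), I would define $Y_n\colon D\to\{c_1,c_2,c_3\}$ by splitting $X^{-1}(a)$ into two disjoint measurable pieces $A_1^{(n)},A_2^{(n)}$ of probabilities $1/2-1/n$ and $1/n$ respectively, setting $Y_n=c_1$ on $A_1^{(n)}$, $Y_n=c_2$ on $A_2^{(n)}$, and $Y_n=c_3$ on $X^{-1}(b)$. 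Because $|\Sigma_{Y_n}|=3\neq 2=|\Sigma_X|$, the variable $Y_n$ cannot be indiscernible from $X$, hence $Y_n\neq X$ in $\overline{\mathcal{C}}$.

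The key observation is that the associated partition $\mathcal{Y}_n$ refines $\mathcal{X}$, so $\mathcal{X}\lor\mathcal{Y}_n=\mathcal{Y}_n$ and consequently $H(X,Y_n)=H(\mathcal{Y}_n)=H(Y_n)$. Substituting into \eqref{suDefinition2} gives
\begin{equation*}
SU(X,Y_n)=2\left[1-\frac{H(Y_n)}{H(X)+H(Y_n)}\right]=\frac{2H(X)}{H(X)+H(Y_n)}=\frac{2}{1+H(Y_n)}.
\end{equation*}
A direct computation yields
\begin{equation*}
H(Y_n)=-\left(\tfrac{1}{2}-\tfrac{1}{n}\right)\log\!\left(\tfrac{1}{2}-\tfrac{1}{n}\right)-\tfrac{1}{n}\log\tfrac{1}{n}-\tfrac{1}{2}\log\tfrac{1}{2},
\end{equation*}
and by continuity of $t\mapsto -t\log t$ at $t=0$ and $t=1/2$, $H(Y_n)\to 1=H(X)$. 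Hence $SU(X,Y_n)\to 1$, so $d(X,Y_n)\to 0$, and no ball of positive radius around $X$ reduces to $\{X\}$.

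The only subtlety I anticipate is the existence of the required measurable subsets $A_1^{(n)},A_2^{(n)}$: if the reader interprets $\mathcal{C}$ as the set of \emph{all} categorical random variables on $(D,\Omega_D,P)$, the construction needs $P$ to be non-purely-atomic on $X^{-1}(a)$. This is a mild assumption that should be stated explicitly; any richer probability space works verbatim. If instead one wishes to avoid any assumption on $D$, one can enlarge the sample space to $D\times[0,1]$ with the product measure and use the second coordinate to perform the splits, an operation that does not affect the entropies of variables depending only on the first coordinate.
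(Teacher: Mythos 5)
Your proof is correct, and it takes a genuinely different (and in fact more watertight) route than the paper's. The paper exhibits two \emph{binary} variables whose probability vectors differ by a small $\epsilon$ and asserts $H(X|Y)=0$, so that $SU(X,Y)=2H(X)/(H(X)+H(Y))\to 1$; read literally, that assertion is delicate, because for two two-category variables $H(X|Y)=0$ forces $X$ to be a function of $Y$, which (barring degeneracy) makes the two variables indiscernible and hence the \emph{same} point of $\overline{\mathcal C}$. Your construction sidesteps exactly this tension: by taking $Y_n$ to be a genuine refinement of the partition of $X$ you get $H(X|Y_n)=0$ for free (so $H(X,Y_n)=H(\mathcal Y_n)$ rigorously, via $\mathcal X\le\mathcal Y_n$ and Lemma~\ref{lemma:relative-entropy-properties}c), while the cardinality mismatch $|\Sigma_{Y_n}|=3\ne 2$ certifies $[Y_n]\ne[X]$ in the quotient under Definition~\ref{defIndiscernibles}. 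The limit $H(Y_n)\to H(X)=1$ and hence $SU(X,Y_n)\to 1$ is then an elementary continuity computation. What your approach costs is the explicit richness hypothesis on $(D,\Omega_D,P)$ (existence of measurable subsets of arbitrarily small positive measure), which you correctly identify and repair by passing to $D\times[0,1]$; note that the paper's own example needs a comparable implicit assumption to realize two binary variables with prescribed, slightly different marginals on the same sample space. Two trivial polish points: take $n\ge 3$ so that both pieces of the split have positive probability, and, if you want the stronger statement that the ball punctured at $X$ is nonempty for every radius, observe that $d(X,Y_n)=(H(Y_n)-1)/(1+H(Y_n))>0$ since the strict refinement gives $H(Y_n)>H(X)$.
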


\begin{proof}
	Let \(X\) be a categorical random variable, and let us consider \(Y\) as a noisy copy of \(X\), namely: \(Y=X\) with probability \(1-\epsilon\), and an independent redraw of the categories with probability \(\epsilon\). Thus, \(Y\) is almost a function of \(X\), which means that knowing \(X\) almost determines \(Y\). Then, as \(\epsilon \to 0\) we have \(H(Y|X)\to 0\) and \(H(Y)\to H(X)\), hence
	\[SU(X, Y) = 2\frac{H(Y)-H(Y|X)}{H(X)+H(Y)}\to 2\frac{H(X)}{H(X)+H(X)}=1.\]
	It follows that \(d(X, Y) = 1-SU(X, Y)\to 0.\) 
\end{proof}
\section{Algebraic Structure of Categorical Variables} \label{sec:monoidcontinuity}

Let $A$ and $B$ be two categorical random variables defined over a common sample space $D$. To fix the idea, let us assume that $A$ stands for income, which we consider to have three possible values, namely $\{l,m, h\}$ for ``low'', ``middle'' and ``high'', and that $B$ stands for owning a house or not, which we consider to have two possible values, namely $\{y, n\}$, for ``yes'' and ``no''. It is quite common to produce, out of $A$ and $B$, another random variable $C$, which we will denote as $C=A*B$, by the following rule: Given any $p\in D$, then $C(p):= (A(p), B(p))$.

Thus, if $p\in D$ is an individual with middle income, and owns some house, then $C(p) = (m, y)$. Let us observe that the possible values for $C$ are the set
\[\{(l, y), (l, n), (m, y), (m, n), (h, y), (h, n)\}.\]

If we denote by $\Sigma_A$ the codomain of the random variable $A$ and by $\Sigma_B$ the codomain of the random variable $B$, then we get that the codomain of $C$ is $\Sigma_C=\Sigma_A\times \Sigma_B$. Let us also recall that, if $P$ is the probability measure in the sample space $D$, and $P_A$, $P_B$ are the probability mass functions of $A$ and $B$, respectively, then, in the measurable product space $\Sigma_A\times \Sigma_B$ we have the joint probability mass function $P_A\times P_B\colon \Sigma_A\times \Sigma_B\to \mathbb{R}$ given by
\[P_A\times P_B(x, y) = P(A=x\wedge B=y) = P(A=x|B=y)\cdot P(B=y).\]

Thus, if we denote by ${\cal C}$ the set of all categorical random variables defined over $D$, what the previous procedure yields is an internal law of composition in ${\cal C}$.

In this section we establish that this operation introduces in ${\cal C}$ an algebraic structure (more specifically, in the quotient space $\overline{\cal C}$ given by the indiscernibility equivalence relation), and explore some of its basic properties.  

\subsection{Formal definitions}

Let us fix again the space ${\cal C}$, and recall that it consists of all categorical random variables defined on the sample space $D$. Let us also denote again by $\overline{\cal C}$ the quotient space given by the indiscernibility equivalence relation in ${\cal C}$.

\begin{definition}
	Given $A\colon D\to \Sigma_A$ and $B\colon D\to\Sigma_B$, two categorical variables in ${\cal C}$. Then we define a new categorical random variable
	\[C\colon D\to \Sigma_C=\Sigma_A\times \Sigma_B; C(p) := (A(p), B(p)).\]
	Let us denote this by $C = A*B$, and call it the {\bf joint} of $A$ and $B$.
\end{definition}
Notice that, in the context of the previous definition, for each \((a, b)\in \Sigma_A\times \Sigma_B\) we have
\[C^{-1}(a, b) = A^{-1}(a)\cap B^{-1}(b),\]
hence, if \({\cal A}\) and \({\cal B}\) are the partitions given by \(A\) and \(B\), then the partition given by \(C\) is
\[{\cal C} = {\cal A}\cap {\cal B}.\]

\begin{proposition}
	Given $A, B\in {\cal C}$, the operation
	\[[A]*[B]:= [A*B]\]
	is well defined.
\end{proposition}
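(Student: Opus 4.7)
The plan is to verify well-definedness by taking two pairs of representatives $A \sim A'$ and $B \sim B'$ and exhibiting a bijection that witnesses $A*B \sim A'*B'$. Concretely, I would pick bijections $\sigma_A \colon \Sigma_A \to \Sigma_{A'}$ and $\sigma_B \colon \Sigma_B \to \Sigma_{B'}$ realizing each of the two given indiscernibilities, and form the product map $\sigma := \sigma_A \times \sigma_B \colon \Sigma_A \times \Sigma_B \to \Sigma_{A'} \times \Sigma_{B'}$. This $\sigma$ is immediately a bijection between the codomains $\Sigma_{A*B} = \Sigma_A \times \Sigma_B$ and $\Sigma_{A'*B'} = \Sigma_{A'} \times \Sigma_{B'}$, which handles the ``same cardinality / existence of a bijection'' half of the indiscernibility criterion.

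For the second half I need the probability mass functions of $A*B$ and $A'*B'$ to correspond under $\sigma$. The cleanest route uses the pointwise identities $A' = \sigma_A \circ A$ and $B' = \sigma_B \circ B$ on $D$, giving at the level of maps
\[(A'*B')(d) = (A'(d), B'(d)) = \bigl(\sigma_A(A(d)),\sigma_B(B(d))\bigr) = \sigma\bigl((A*B)(d)\bigr).\]
Pushing forward by $P$ yields $P_{A*B}(x, y) = P_{A'*B'}\bigl(\sigma(x, y)\bigr)$ for every $(x, y)\in \Sigma_A\times \Sigma_B$, which is precisely the indiscernibility of $A*B$ and $A'*B'$, so $[A*B] = [A'*B']$ as required.

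The main point of care is the step that passes from the marginal bijections $\sigma_A, \sigma_B$ to a product bijection compatible with the joint distribution of $A*B$. This is where the argument depends on reading the indiscernibility relation as an identification between the random variables themselves on the ambient space $D$, so that $\sigma_A$ genuinely relabels $A$ into $A'$ pointwise, and not merely as an abstract equality of one-dimensional histograms; under the weaker reading the joint distributions of $A*B$ and $A'*B'$ could differ even though their marginals coincide. Once the pointwise reading is in place, the display above closes the argument without further computation, and the operation $[A]*[B] := [A*B]$ is well defined on $\overline{\mathcal C}$.
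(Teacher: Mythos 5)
Your argument follows the same outline as the paper's proof --- form the product of the two marginal bijections, then check that the joint mass functions correspond under it --- but you are more careful at the one step that actually matters, and that care pays off. The paper simply asserts that $P_A=P_{A'}$ and $P_B=P_{B'}$ imply $P_{A*B}=P_{A'*B'}$; as you note, this does not follow, because a joint distribution is not determined by its marginals. Your pointwise identities $A'=\sigma_A\circ A$ and $B'=\sigma_B\circ B$ on $D$ are exactly the extra input needed to push the whole map $A*B$ forward onto $A'*B'$, and your caveat about the weaker reading is not mere pedantry: under the paper's literal definition of indiscernibility (same histogram up to relabeling, nothing more) the proposition is in fact false. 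For instance, take $D=\{1,2,3,4\}$ with uniform measure, let $A$ send $\{1,2\}\mapsto a_1$ and $\{3,4\}\mapsto a_2$, let $A'$ send $\{1,3\}\mapsto a_1$ and $\{2,4\}\mapsto a_2$, and let $B=A$. Then $A\sim A'$ in the histogram sense, but $A*B$ has two equiprobable outcomes (entropy $1$) while $A'*B$ has four (entropy $2$), so $[A*B]\neq[A'*B]$. So your proof is correct under the pointwise reading you adopt, and in flagging the dependence on that reading you have located a genuine gap in the paper's own argument rather than introduced one of your own; the definition of indiscernibility would need to be strengthened (or the equivalence taken to be generated by pointwise relabelings of the codomain) for the operation to descend to $\overline{\cal C}$.
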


\begin{proof}
	Indeed, let $A'\colon D\to \Sigma_{A'}$ and $B'\colon D\to\Sigma_{B'}$ be two categorical random variables equivalent to $A$ and $B$, respectively. Then, the bijections \[h_A\colon \Sigma_{A'}\to \Sigma_A \quad \text{and}\quad h_B\colon \Sigma_{B'}\to \Sigma_{B}\] yield a bijection $h\colon \Sigma_{A'}\times \Sigma_{B'}\to \Sigma_{A}\times \Sigma_{B}$. Also, for the corresponding partitions we have \({\cal A}\sim {\cal A'}\) and \({\cal B}\sim{\cal B'}\). It follows that for \(C = A*B\) and \(C'= A'*B'\) we have \({\cal C}\sim {\cal C'}\), and thus \([C] = [C']\).
\end{proof}

\begin{proposition}
	The joint operation $*\colon \overline{\cal C}\times \overline{\cal C}\to \overline{\cal C}$ is associative and commutative.
\end{proposition}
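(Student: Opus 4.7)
The plan is to verify both properties at the level of representative random variables and then transfer them to $\overline{\cal C}$ using the well-definedness of $*$ established in the preceding proposition. The first thing to note is that neither $A*B = B*A$ nor $(A*B)*C = A*(B*C)$ holds on the nose: for commutativity the codomains $\Sigma_A\times \Sigma_B$ and $\Sigma_B\times \Sigma_A$ are formally different sets, and for associativity $(\Sigma_A\times \Sigma_B)\times \Sigma_C$ differs formally from $\Sigma_A\times (\Sigma_B\times \Sigma_C)$. In each case the task reduces to producing a bijection between the relevant codomains under which the joint probability mass functions coincide, so that the two joint variables become indiscernible in the sense of Definition \ref{defIndiscernibles}.

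For commutativity, I would use the swap bijection $\tau\colon \Sigma_A\times \Sigma_B\to \Sigma_B\times \Sigma_A$, $(a,b)\mapsto (b,a)$, and observe that for every $a\in \Sigma_A$ and $b\in \Sigma_B$,
\[
P_{A*B}(a,b) = P(A=a\wedge B=b) = P(B=b\wedge A=a) = P_{B*A}(\tau(a,b)),
\]
so that $A*B\sim B*A$, and therefore $[A]*[B] = [B]*[A]$ in $\overline{\cal C}$.

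For associativity, I would proceed analogously with the standard associator $\alpha\colon (\Sigma_A\times \Sigma_B)\times \Sigma_C \to \Sigma_A\times (\Sigma_B\times \Sigma_C)$, $((a,b),c)\mapsto (a,(b,c))$, and verify by evaluating both sides pointwise on $D$ that
\[
P_{(A*B)*C}((a,b),c) = P(A=a\wedge B=b\wedge C=c) = P_{A*(B*C)}(\alpha((a,b),c)).
\]
This gives $(A*B)*C\sim A*(B*C)$, hence the identity $([A]*[B])*[C] = [A]*([B]*[C])$.

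The argument is mostly bookkeeping. The only conceptual subtlety is keeping in mind that equality in $\overline{\cal C}$ means equality up to a bijection between the sets of categories together with coincidence of the probability mass functions, and the bijections $\tau$ and $\alpha$ are precisely the natural maps witnessing this. I do not anticipate any real obstacle beyond writing the pointwise equalities of joint probabilities cleanly and invoking the previous proposition to pass from representatives to equivalence classes.
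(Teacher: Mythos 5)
Your proposal is correct and follows essentially the same route as the paper: pick representatives, exhibit the natural bijections of codomains (the swap for commutativity, the associator for associativity), check that the joint probability mass functions agree under them, and pass to $\overline{\cal C}$. The only cosmetic difference is that the paper silently identifies $(\Sigma_A\times\Sigma_B)\times\Sigma_C$ with $\Sigma_A\times(\Sigma_B\times\Sigma_C)$ and gets associativity as a pointwise equality of functions, whereas you make the associator bijection explicit --- a slightly more careful version of the same argument.
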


\begin{proof}
	Consider $[A], [B], [C]\in \overline{\cal C}$. For any representatives,      
	\[A\colon D\to \Sigma_A;\quad B\colon D\to \Sigma_B;\quad C\colon D\to \Sigma_C,\]
	we need to show that  
	
	\[A*(B*C)\colon D\to \Sigma_A\times (\Sigma_B\times \Sigma_C) = \Sigma_A\times \Sigma_B\times\Sigma_C\]
	and
	\[(A*B)*C\colon D\to (\Sigma_A\times \Sigma_B)\times \Sigma_C = \Sigma_A\times \Sigma_B\times\Sigma_C,\]
	are indiscernible.
	
	We clearly have the bijection between the codomains, and also for the corresponding partition we have
	
	\[{\cal A}\cap({\cal B}\cap {\cal C}) = ({\cal A}\cap {\cal B})\cap {\cal C}.\]

	For the commutativity, observe that
    \[
        A*B\colon D\to \Sigma_{A*B} = \Sigma_{A}\times \Sigma_B
        \quad \text{and} \quad 
        B*A\colon D\to \Sigma_{B*A} =\Sigma_B\times \Sigma_A.
    \]
    Hence, we have a bijection between $\Sigma_{A*B}$ and $\Sigma_{B*A}$, and for the partitions we clearly have
	\({\cal A}\cap {\cal B} = {\cal B}\cap {\cal A}.\)
	Thus, we have \([A]*[B] = [B]*[A]\).
\end{proof}

So far, we have an internal operation in the space $\overline{\cal C}$, which is associative and commutative. Next we identify a categorical variable (or better, an equivalence class of categorical variables) that plays the role of neutral element.
\begin{proposition}
	Let $\Phi\colon D\to \{\phi\}$ be a categorical random variable whose only possible outcome is the singleton $\{\phi\}$. Then, for any other random variable $A\colon D\to \Sigma_A$, we have
	\[[A]*[\Phi] = [A].\]
\end{proposition}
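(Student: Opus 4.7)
The plan is to unwind the definition of the joint operation on the representative $A*\Phi$ and verify directly that it is indiscernible from $A$. By the definition of the joint, we have $A*\Phi\colon D\to \Sigma_A\times \{\phi\}$ given by $(A*\Phi)(p) = (A(p), \phi)$. I would first exhibit the obvious bijection
\[\beta\colon \Sigma_A\times \{\phi\}\to \Sigma_A;\quad \beta(x, \phi) = x,\]
which establishes the first ingredient required by the indiscernibility relation in Definition \ref{defIndiscernibles} (namely, a bijection between the codomains).

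Next I would check that the probability mass functions agree via $\beta$. Since $\Phi$ takes the value $\phi$ with probability $1$, the event $\{\Phi = \phi\}$ is almost surely equal to $D$, so for any $x\in \Sigma_A$,
\[P_{A*\Phi}(x, \phi) = P(A = x \wedge \Phi = \phi) = P(A = x) = P_A(x).\]
Thus $P_{A*\Phi} = P_A\circ \beta^{-1}$, and $A*\Phi$ and $A$ have the same histogram up to the relabeling $\beta$. By Definition \ref{defIndiscernibles}, $A*\Phi$ and $A$ are indiscernible, which means $[A*\Phi] = [A]$. Combined with the definition $[A]*[\Phi] := [A*\Phi]$, this yields the claim.

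There is no real obstacle here beyond carefully matching the formal definition of indiscernibility with the evident fact that adjoining a constant random variable does not change any probabilities; the only subtlety to note is that the equality $[A]*[\Phi] = [A]$ is genuinely an equality of equivalence classes rather than of representatives, because $\Sigma_A$ and $\Sigma_A\times \{\phi\}$ are merely in bijection, not literally equal as sets.
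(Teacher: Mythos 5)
Your proposal is correct and follows essentially the same route as the paper's own proof: exhibit the bijection $\Sigma_A \to \Sigma_A\times\{\phi\}$ (you write its inverse $\beta$, which is immaterial) and verify that $P_{A*\Phi}(x,\phi) = P_A(x)$ because $\Phi$ is almost surely constant, the paper doing this via the factorization $P(A=x\mid\Phi=\phi)\cdot P(\Phi=\phi)$ and you via the event identity $\{\Phi=\phi\} = D$ a.s. Your closing remark that the conclusion is an equality of equivalence classes rather than of representatives is a point the paper makes implicitly as well.
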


\begin{proof}
	Let us first observe that any random variable on $D$ having a singleton as the only possible outcome is equivalent to $\Phi$.  
	Now, we observe that $\Sigma_{A*\Phi} = \Sigma_A\times\{\phi\}$, so we have a bijection $\Sigma_A\to \Sigma_{A*\Phi}$ given by $x\mapsto (x, \phi)$. 
	Also, the partition corresponding to \(\Phi\) is the trivial one \({\cal D} = \{D\}\), and thus the partition of \(A*\Phi\) is
	\({\cal A}\cap{\cal D} = {\cal A}.\)
	Hence, we have \([A]*[\Phi] = [A],\) as claimed.
\end{proof}

Let us summarize the previous discussion in the following theorem:

\begin{theorem}
	The joint operation on $\overline{{\cal C}}$ yields a commutative monoid structure.
\end{theorem}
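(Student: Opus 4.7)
The plan is simply to observe that the three preceding propositions together verify all of the axioms defining a commutative monoid, so the proof is really a bookkeeping exercise rather than the introduction of new ideas. Recall that a commutative monoid is a triple $(M, \cdot, e)$ where $\cdot$ is a binary operation on $M$ that is associative and commutative, and $e\in M$ is a two-sided identity for $\cdot$.

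First I would point out that the proposition establishing $[A]*[B]:=[A*B]$ is well defined provides the binary operation $*\colon \overline{\cal C}\times \overline{\cal C}\to \overline{\cal C}$. Next, I would invoke the proposition showing that $*$ is both associative and commutative on $\overline{\cal C}$. Finally, I would cite the proposition exhibiting the singleton-valued class $[\Phi]$ as an identity, so that $[A]*[\Phi]=[A]$ for every $[A]\in \overline{\cal C}$.

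The only small point not explicitly recorded earlier is that $[\Phi]$ must be a \emph{two-sided} identity. This, however, is immediate from the commutativity already proved: for any $[A]\in \overline{\cal C}$ we have $[\Phi]*[A]=[A]*[\Phi]=[A]$. Taking the triple $(\overline{\cal C}, *, [\Phi])$, all axioms of a commutative monoid are therefore satisfied, proving the theorem. There is no genuine obstacle here; the substance of the result lies in the earlier propositions (especially the associativity and the verification that $*$ descends to the quotient), and the theorem merely packages them under the standard algebraic name.
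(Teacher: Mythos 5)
Your proposal is correct and matches the paper exactly: the paper offers no separate proof, explicitly presenting the theorem as a summary of the three preceding propositions (well-definedness, associativity and commutativity, and the identity $[\Phi]$). Your additional remark that commutativity upgrades $[\Phi]$ to a two-sided identity is a small but legitimate point the paper leaves implicit.
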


\subsection{Compatibility of the algebraic and topological structures}
Let us recall that, in the previous section, we found a metric topology structure in the set $\overline{{\cal C}}$, given by the metric $d = 1-SU$. Now we will show that the algebraic structure given on the same set by the joint operation and the topological structure in ${\overline{\cal C}}$ are compatible. 

To do so, let us first rewrite the distance function on a more symmetric form. For that, we recall the symmetry of mutual information, namely:
\[MI(X;Y) = H(X)-H(X|Y) = H(Y)-H(Y|X) = MI(Y;X).\]
Hence, for the distance $d=1-SU$ we get
\begin{equation}
	\begin{split}
		d(X, Y) = 1-SU(X, Y) &= 1-\frac{2MI(X;Y)}{H(X)+H(Y)}\\
		&= \frac{H(X)+H(Y)-H(X)+H(X|Y) - H(Y)+H(Y|X)}{H(X)+H(Y)}\\
		& = \frac{H(X|Y)+H(Y|X)}{H(X)+H(Y)}
	\end{split}
\end{equation}

Also, observe that, given categorical random variables $X\colon D\to \Sigma_X$ and $Y\colon D\to \Sigma_Y$, we have
\[H(X, Y) = H(X*Y) = H({\cal X}\lor {\cal Y}),\]
where ${\cal X}$ and ${\cal Y}$ are the partitions of $D$ associated to the variables $X$ and $Y$, respectively.

Now we have all the tools in place to prove the aforementioned compatibility.
\begin{theorem}
	The joint operation 
	\[*\colon \overline{{\cal C}}\times\overline{{\cal C}}\to \overline{{\cal C}}\]
	is continuous with respect to the topology on ${\overline{{\cal C}}}$ given by $d=1-SU$ and the natural product topology induced on $\overline{{\cal C}}\times\overline{{\cal C}}$.
\end{theorem}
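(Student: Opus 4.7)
The plan is to establish the stronger statement that $\ast$ is Lipschitz with constant $1$ in each variable simultaneously, namely
\[d(X\ast Y,\ X_0\ast Y_0)\ \le\ d(X,X_0)+d(Y,Y_0),\]
from which joint continuity with respect to the product topology on $\overline{\cal C}\times\overline{\cal C}$ is immediate. The natural workspace for this inequality is the partition picture: I would translate everything through ${\cal X}\lor{\cal Y}$, which realises $X\ast Y$, so that the results collected in Lemma~\ref{lemma:relative-entropy-properties} and in equations \eqref{eq:abs-entropy-less-joint-entropy}--\eqref{eq:refinement-inequality} become directly applicable.

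First, I would use the symmetric form
\[d(U,V)=\frac{H(U\mid V)+H(V\mid U)}{H(U)+H(V)}\]
derived just before the statement, specialised to $U=X\ast Y$ and $V=X_0\ast Y_0$. The numerator decomposes via item $a)$ of Lemma~\ref{lemma:relative-entropy-properties}:
\[H({\cal X}\lor{\cal Y}\mid {\cal X}_0\lor{\cal Y}_0)=H({\cal X}\mid {\cal X}_0\lor{\cal Y}_0)+H({\cal Y}\mid {\cal X}\lor{\cal X}_0\lor{\cal Y}_0),\]
and by the refinement inequality \eqref{eq:refinement-inequality} each term is bounded above by $H({\cal X}\mid {\cal X}_0)$ and $H({\cal Y}\mid {\cal Y}_0)$, respectively. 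The symmetric bound for $H({\cal X}_0\lor{\cal Y}_0\mid {\cal X}\lor{\cal Y})$ is obtained in the same way.

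For the denominator, \eqref{eq:abs-entropy-less-joint-entropy} gives $H({\cal X}\lor{\cal Y})\ge H({\cal X})$ and $H({\cal X}\lor{\cal Y})\ge H({\cal Y})$, so
\[H(X\ast Y)+H(X_0\ast Y_0)\ \ge\ H(X)+H(X_0)\quad\text{and}\quad \ge\ H(Y)+H(Y_0).\]
Splitting the (bounded) numerator as $[H({\cal X}\mid{\cal X}_0)+H({\cal X}_0\mid{\cal X})]+[H({\cal Y}\mid{\cal Y}_0)+H({\cal Y}_0\mid{\cal Y})]$ and applying each of these two lower bounds to the respective piece yields exactly $d(X,X_0)+d(Y,Y_0)$, proving the Lipschitz estimate and hence continuity of $\ast$ at $([X_0],[Y_0])$.

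The only delicate point I expect is the edge case in which one or both denominators degenerate, i.e.\ when $H(X)=H(X_0)=0$ (both equivalent to the neutral element $[\Phi]$) or the analogous situation for $Y$. In that case the partitions involved are trivial, ${\cal X}={\cal X}_0={\cal D}$, so the corresponding conditional entropies vanish and the piece simply drops out of the bound; the remaining piece is well defined. I would handle this by observing that on the complement of this degenerate locus the argument above applies verbatim, and at the degenerate points continuity follows from the fact that $[A]\ast[\Phi]=[A]$ (so $\ast$ restricted there is the identity, which is continuous). No further analytic ingredient is needed, since the entropies themselves are continuous functions of the underlying probability mass functions.
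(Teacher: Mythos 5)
Your proposal is correct and follows essentially the same route as the paper's own proof: the symmetric form of $d$, the decomposition of the numerator via item $a)$ of Lemma \ref{lemma:relative-entropy-properties} together with the refinement inequality \eqref{eq:refinement-inequality}, and the lower bound \eqref{eq:abs-entropy-less-joint-entropy} on the denominator, yielding the same estimate $d(X*Y,Z*W)\le d(X,Z)+d(Y,W)$. Your additional attention to the degenerate case $H(X)=H(X_0)=0$ is a point the paper's proof passes over silently, and is a worthwhile refinement.
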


\begin{proof}
	Let $X, Y, Z, W$ be categorical random variables on $D$, for simplicity, we will avoid the equivalence class notation and we will also use the same symbols for their associated partitions. Then we have
	\begin{equation}
		d(X*Y, Z*W) = \frac{H(X*Y|Z*W)+H(Z*W|X*Y)}{H(X*Y)+H(Z*W)}
	\end{equation}
	Applying item $a)$ from Lemma \ref{lemma:relative-entropy-properties}, and the inequality given in equation~\eqref{eq:refinement-inequality} to the terms in the numerator we get:
	\begin{equation}
		\begin{split}
			H(X*Y|Z*W) &= H(X|Z*W)+H(Y|X*Z*W)\leq H(X|Z)+H(Y|W)\\
			H(Z*W|X*Y) &= H(Z|X*Y)+H(W|Z*X*Y)\leq H(Z|X)+H(W|Y).
		\end{split}
	\end{equation}
	
	Thus, we get
	\begin{equation}
		d(X*Y, Z*W) \leq\frac{H(X|Z)+H(Z|X)}{H(X*Y)+H(Z*W)}+\frac{H(Y|W)+H(W|Y)}{H(X*Y)+H(Z*W)}
	\end{equation}
	Now we apply inequality ~\eqref{eq:abs-entropy-less-joint-entropy} to each summand in the denominators, to get
	\begin{equation}
		\begin{split}
			d(X*Y, Z*W) & \leq\frac{H(X|Z)+H(Z|X)}{H(X)+H(Z)}+\frac{H(Y|W)+H(W|Y)}{H(Y)+H(W)}\\
			&= d(X, Z)+d(Y, W)
		\end{split}
	\end{equation}
	Thus, the mapping is contractive, hence, continuous. 
\end{proof}

\section{Concluding Remarks}

In this paper we have demonstrated that $1 - SU$, where $SU$ is the entropic correlation between categorical random variables, is a distance metric, thus endowing the space of categorical random variables with a topological structure. We have also shown that a natural joint operation (denoted $*$) endows the same space with an algebraic structure, concretely, a commutative monoid structure. Finally, we have proven that the topological and algebraic structures are compatible, given that the joint operation is continuous with respect to the topological structure.

The compatibility provides a smooth and straightforward way for the user to interpret and understand the output value from SU. This way, SU becomes more intuitive for statistical practitioners, allowing practitioners to work with and employ this entropic (non-parametric) correlation in much the same way as the Pearson (parametric) correlation has been employed over the years.

Equipped with compatible topology and algebraic structure, $SU$ further formalizes categorical random variables (CRVs) as repositories and proper representations of the concepts conveyed by the variable's categories. Thus, qualitative variables not only can be counted in their categories, but their mutual distances now tell us about similarities or likely associations. These new possibilities and measures are offered in an environment of mathematical rigor.

The authors are now working to extend the findings to the multivariate symmetric uncertainty (MSU) measure on $n$ categorical variables. Note that the MSU measure could as well be called Multivariable Entropic Correlation, as uncertainties get mutually canceled when computing correlation.

{\small
    
}

\end{document}